\pgfplotsset{compat=1.14}
\newcommand{\R}{\mathbb{R}}
\newcommand{\N}{\mathbb{N}}
\theoremstyle{definition}
\newtheorem{theorem}{Theorem}[section]
\newtheorem{corollary}[theorem]{Corollary}
\newtheorem{proposition}[theorem]{Proposition}
\theoremstyle{definition}
\newtheorem{defn}[theorem]{Definition}
\newtheorem{definition}[theorem]{Definition}
\newtheorem{example}[theorem]{Example}
\newtheorem{remark}[theorem]{Remark}
\newtheorem{proc}[theorem]{Procedure}
\newcommand{\scc}{\mathcal{S}}
\newcommand{\F}{f_{c, \kappa}}
\newcommand{\St}{{S}}
\newcommand{\Rnn}{\mathbb{R}_{\geq 0}}
\def\SS{\mathcal S}
\def\CC{\mathcal C}
\def\RR{\mathcal R}
\def\la{\leftarrow}
\newcommand{\lradot}{%
  \mathrel{\ooalign{\hfil$\vcenter{
   \hbox{$\mkern3mu\scriptscriptstyle\bullet$}}$\hfil\cr$\longleftrightarrow$\cr}
  }%
}
\DeclareMathOperator{\im}{im}
\newcommand{\vol}{\mathrm{Vol}}
\newcommand{\defword}[1]{\textcolor{purple}{\underline{#1}}}
\title{Mixed volume of small reaction networks}
\author[1]{Nida Obatake} 
\author[1]{Anne Shiu}
\author[2]{Dilruba Sofia}
\affil[1]{Department of Mathematics, Texas A\&M University} 
\affil[2]{Department of Mathematics, University of Massachusetts-Dartmouth}
\date{April 29, 2020}
\begin{document}

\maketitle

\begin{abstract}
An important invariant of a chemical reaction network is its maximum number of positive steady states.  This number, however, is in general difficult to compute.  Nonetheless, there is an upper bound on this number  -- namely, a network's mixed volume -- that is easy to compute.  Moreover, recent work has shown that, for certain biological signaling networks, the mixed volume does not 
greatly exceed 
the maximum number of positive steady states.  Continuing this line of research, we further investigate this overcount and also compute the mixed volumes of small networks, those with only a few species or reactions.
  \vskip 0.1cm
  \noindent \textbf{Keywords:} chemical reaction network, steady state, Newton polytope, mixed volume  
  \vskip 0.1cm
  \noindent {\bf MSC classes:} 52A39, 
    37C10, 
    12D10, 
    65H04, 
    80A30 
\end{abstract}

\section{Introduction} \label{sec:intro}
For chemical reaction networks, information about steady states -- both their number and their nature (stability, etc.) -- yields insight into a network's capacity for processing information.  
Therefore, there 
have been numerous investigations 
into the capacity for multiple steady states, especially for networks arising from biology (see, e.g.,~\cite{BP, CFMW, CTF06, DPST, mss-review,torres-feliu}).  

The next step, determining the maximum number of steady states of a given network, is more difficult.
Indeed, this question, mathematically, asks us to compute the maximum number of positive roots of a family of parametrized polynomial systems.  Therefore, we are interested in upper bounds on this maximum number that are easy to compute.  

One such bound, introduced in~\cite{OSTT}, is the mixed volume of a network (see also the closely related definition 
in~\cite{gross-hill}). 
This bound is surprisingly good for certain biological signaling networks, with the ``mixed-volume overcount'' -- the difference between the mixed volume and the maximum number of steady states -- no more than 2 or 4~\cite{OSTT}.  Related results for three infinite families of networks are obtained in~\cite{gross-hill}.

Here we further investigate the mixed volume and the mixed-volume overcount, with a focus on small networks, those with just a few species or reactions.  
Our results are as follows.  
First, for networks with only one species, we show how to read off the mixed volume (and mixed-volume overcount) directly from the network (Theorems~\ref{thm:mv-1species} and~\ref{thm:equatorial-networks}), and conclude that the mixed-volume overcount can be arbitrarily large (Corollary~\ref{cor:non-equatorial-networks}).  
Next, we investigate networks with two species and two reactions, and show that among those that are at-most-bimolecular, nearly all have mixed-volume overcount 0 (Theorem~\ref{thm:2-rxn-2-species}).
Thus, the mixed volume is an excellent bound for such networks.

The outline of our work is as follows. 
We provide background in Section~\ref{sec:background}.  
Our main results are presented in Section~\ref{sec:results}, and we end with a discussion in Section~\ref{sec:discussion}.

\section{Background} \label{sec:background}
Below, we give background on chemical reaction systems (Section~\ref{sec:CRS}), their steady states (Section~\ref{sec:steady-state}), mixed volume (Section~\ref{sec:MV}), and networks having only one species (Section~\ref{sec:1-species}).

\subsection{Chemical reaction systems} \label{sec:CRS}

\begin{definition} \label{def:crn}
A \defword{{reaction network}} $\defword{G}:=(\SS,\CC,\RR)$
consists of three finite sets:
(1) a set of \defword{{species}} $\defword{\SS} := \{A_1,A_2,\dots, A_s\}$; 
(2) a set  $\defword{\CC} := \{y_1, y_2, \dots, y_p\}$ of \defword{{complexes}} (finite nonnegative-integer combinations of the species); and 
 (3) a set of \defword{{reactions}}, which are ordered pairs of complexes, excluding diagonal pairs: $\defword{\RR} \subseteq  (\CC \times \CC) \smallsetminus \{ (y,y) \mid y \in \CC\}$.
\end{definition}

\noindent
Throughout our work, 
$s$ and $r$ denote the numbers of
species and reactions, respectively.  
A reaction network is \defword{genuine} if every species takes part in at least one reaction.

Writing the $i$-th complex as $y_{i1} A_1 + y_{i2} A_2 + \cdots + y_{is}A_s$ (here, $y_{ij} \in \mathbb{Z}_{\geq 0}$ 
is the \defword{{stoichiometric coefficient}} of $A_j$, for $j=1,2,\dots,s$), 
this complex is \defword{at-most-bimolecular} if $y_{i1}  + y_{i2}  + \dots + y_{is} \leq 2$.  A reaction network is \defword{at-most-bimolecular} if every complex in the network is at-most-bimolecular.

It is customary to write a reaction $(y_i,y_j)$ as $y_i \to y_j$, and $y_i$ is the called the \defword{reactant} and $y_j$ is the 
\defword{product}.
Also, a reaction $y_i \to y_j$ is \defword{{reversible}} if its  reverse reaction $y_j \to y_i$ is also in $\RR$, and we denote such a pair by $y_i \rightleftharpoons y_j$.
A reaction $y_i \to y_j$ 
defines the \defword{{reaction vector}}
 $\defword{y_j-y_i}$, which encodes the
net change in each species 
resulting from the reaction. The \defword{{stoichiometric matrix}} 
$\defword{\Gamma}$ is the $s \times r$ matrix whose $k$-th column 
is the reaction vector of the $k$-th reaction.  
Each reaction comes with a
\defword{{rate constant}} $\defword{\kappa_{ij}}$,
which is a positive parameter.

Next, 
we let $\defword{x_1},\defword{x_2},\ldots,\defword{x_s}$ represent the
concentrations of the $s$ species,
which we view as functions $\defword{x_i(t)}$ of time $t$.
Also, 
we define the monomial 
$\defword{\mathbf{x}^{y_i}} \,\,\, := \,\,\, x_1^{y_{i1}} x_2^{y_{i2}} \cdots  x_s^{y_{is}}~. $

A \defword{{chemical reaction system}} is the 
dynamical system that arises, via mass-action kinetics, from a chemical reaction
network $(\SS, \CC, \RR)$ and a choice of rate constants $(\kappa^*_{ij}) \in
\mathbb{R}^{r}_{>0}$ (recall that $r$ is the number of
reactions), as follows:
\begin{align} \label{eq:ODE-mass-action}
\frac{d\mathbf{x}}{dt} \quad = \quad \sum_{ y_i \to y_j~ {\rm is~in~} \RR} \kappa_{ij} \mathbf{x}^{y_i}(y_j - y_i) \quad =: \quad \defword{f_{\kappa}(\mathbf{x})}~.
\end{align}

Viewing the rate constants as a vector of parameters $\kappa=(\kappa_1, \kappa_2, \dots, \kappa_m)$, we have polynomials $f_{\kappa,i} \in \mathbb Q[\kappa,x]$, for $i=1,2, \dots, s$.  For simplicity, we will write $f_i$ rather than  $f_{\kappa,i}$.

The \defword{{stoichiometric subspace}}, 
  $\defword{\St}~:=~ {\rm span} \left( \{ y_j-y_i \mid  y_i \to y_j~ {\rm is~in~} \RR \} \right)$, 
 is the vector subspace of
$\mathbb{R}^s$ spanned by all reaction vectors
$y_j-y_i$.  
Thus, $\St = \im(\Gamma)$, where $\Gamma$ is the stoichiometric matrix.
Let $d=s-{\rm rank}(\Gamma)$.  
A \defword{conservation-law matrix} of $G$, denoted by $W$, is a row-reduced $d\times s$-matrix whose rows form a basis of 
the orthogonal complement of $S$.

A trajectory $x(t)$ that starts at a 
    positive vector $x(0)=x^0 \in
    \mathbb{R}^s_{> 0}$ 
remains, for all positive time,
 in the following \defword{stoichiometric compatibility class} with respect to the \defword{total-constant vector} $c\coloneqq W x^0 \in {\mathbb R}^d$: 
    \begin{align} \label{eqn:invtPoly}
    \scc_c~\coloneqq~ \{x\in {\mathbb R}_{\geq 0}^s \mid Wx=c\}~.
    \end{align}


\begin{example}\label{ex:2s2r-network}
Consider the network $G = \{ \ce{2A ->[$k_1$] 2B}~,~\ce{B ->[$k_2$] A} \}$. This network has $r=2$ non-reversible reactions
involving $p=4$ distinct complexes -- represented as vectors $
(2,0),~(0,2),~(0,1),~(1,0) 
$
-- 
on $s=2$ species, $A$ and $B$. 
Also, the network is genuine and at-most-bimolecular.
The stochiometric matrix of $G$ is 
\[
\Gamma=
\begin{bmatrix}
-2 & 1\\
2 & -1
\end{bmatrix}~.
\]
The stoichiometric subspace $S$, which has dimension $d=1$,
is spanned by $(1,-1)^{\text{T}}$, and a conservation-law matrix of $G$ is 
$
W=\begin{bmatrix}
1 & 1
\end{bmatrix}
$.
Let $\textbf{x}(t) = (x_1(t),x_2(t))\in \Rnn^2$
denote the vector of concentrations of species $A$ and $B$. 
A conservation law for $G$ is
$x_1+x_2=c_1$ for $c_1\in \Rnn$.
The chemical reaction system of $G$ arising from mass-action kinetics is
\begin{align*}
    \frac{d \textbf{x}}{dt} \quad &= \quad \begin{pmatrix}
     -2k_1x_1^2+k_2x_2 \\
     2k_1x_1^2-k_2x_2
    \end{pmatrix}~.
\end{align*}
\end{example}

\subsection{Steady states} \label{sec:steady-state}
For a chemical reaction system,
a \defword{{steady state}} is a nonnegative concentration vector $\defword{\mathbf{x}^*} \in \Rnn^s$ at which the right-hand side of the ODEs~\eqref{eq:ODE-mass-action}  vanish: $f_{\kappa} (\mathbf{x}^*) = 0$.  
We will focus on \defword{{positive steady states}} $\mathbf{x} ^* \in \mathbb{R}^s_{> 0}$.

To analyze steady states in a stoichiometric compatibility class, we 
use conservation laws 
in place of linearly dependent steady-state equations, as follows.
Let $I = \{i_1 < i_2< \dots < i_d\}$ denote the indices of the first nonzero coordinate of the rows of conservation-law matrix $W$.
For a total-constant vector $c$,
define the function $\F: {\mathbb R}_{\geq 0}^s\rightarrow {\mathbb R}^s$ as follows:
\begin{equation}\label{consys}
f_{c,\kappa,i} =\F(x)_i :=
\begin{cases}
f_{i}(x)&~\text{if}~i\not\in I,\\
(Wx-c)_k &~\text{if}~i~=~i_k\in I .
\end{cases}
\end{equation}
The system~\eqref{consys} is called the system \defword{augmented by conservation laws}. 

\begin{remark}
For networks without conservation laws, the augmented system is just the original system $\F$ in~(\ref{eq:ODE-mass-action}).
\end{remark}

\begin{definition}~ \label{def:mss}
\begin{enumerate}
\item A network is \defword{{multistationary}}
if there exist positive rate constants $\kappa_{ij}$
such that, for the corresponding chemical system~\eqref{eq:ODE-mass-action}, there is some stoichiometric compatibility class~\eqref{eqn:invtPoly} having two or more positive steady states. 

\item 
A network \defword{{admits $k$ positive steady states}}
(for some $k \in \mathbb{Z}_{\geq 0}$)
if there exists a choice of positive rate constants so that the resulting mass-action system 
has exactly $k$ positive steady states in some stoichiometric compatibility class.
\end{enumerate}

\end{definition}
The \defword{maximum number of positive steady states} 
of a network $G$ is the maximum value of~$k$ (with $k \in \mathbb{Z}_{\geq 0}$) for which $G$ admits $k$ positive steady states.

\subsection{Mixed volume} \label{sec:MV}
Here we recall from~\cite{OSTT} the mixed volume of a network, which is in general an upper bound on the maximum number of positive steady states.
For background on convex and polyhedral geometry, see~\cite{ewald,ziegler}. In particular, for 
a polynomial $f = b_1 x^{\sigma_1} + b_2 x^{\sigma_2} + \dots + b_{\ell} x^{\sigma_{\ell}} \in \mathbb{R}[x_1,x_2,\dots, x_s]$, where 
the 
exponent vectors $\sigma_i \in \mathbb{Z}^s$ are distinct and 
$b_i \neq 0$ for all $i$, the \defword{Newton polytope} of $f$
is the convex hull of its exponent vectors:
$	{\rm Newt}(f) \coloneqq {\rm conv} \{\sigma_1,~ \sigma_2,~ \dots~ ,~ \sigma_{\ell}\} ~\subseteq~ \mathbb{R}^s.$

\begin{defn} 
Let $P_1, P_2, \ldots,P_s\subseteq \R^s$ be polytopes. 
The volume of the Minkowski sum $\lambda_1P_1+ \lambda_2 P_2+ \ldots+\lambda_s P_s$ is a
degree-$s$ homogeneous polynomial in nonnegative variables $\lambda_1,\lambda_2,\ldots,\lambda_s$. In this polynomial, the coefficient 
of $\lambda_1 \lambda_2\cdots\lambda_s$, 
denoted by $\vol(P_1, P_2, \ldots,P_s)$, is the \defword{mixed volume} of $P_1,P_2,...,P_s$. 
\end{defn}
\noindent



\begin{defn}\label{def:mv-crn}
Let $G$ be a network with $s$ species, 
$r$ reactions, and 
a $d \times s$ conservation-law matrix $W$.
Let $f_{c,\kappa}$, as in~\eqref{consys}, denote the resulting system augmented by conservation laws. Let $c^*\in \mathbb{R}^d_{\neq 0}$, and let 
$\kappa^* \in \mathbb{R}^r_{>0}$ be generic. 
Let $P_1, P_2, \ldots,P_s\subset \R^s$ be the Newton polytopes of $f_{c^*,\kappa^*,1},f_{c^*,\kappa^*,2}, \ldots, f_{c^*,\kappa^*,s}$, respectively. The \defword{mixed volume of $G$} (with respect to $W$)
is the mixed volume of $P_1,P_2,\ldots,P_s$.
\end{defn}

The mixed volume (Definition~\ref{def:mv-crn}) is well defined~\cite[Remark 8]{OSTT}.  The next result follows from Bernstein's theorem~\cite{bernstein} (see \cite[Proposition 8]{OSTT}): 

\begin{proposition} \label{prop:bounds}
For every network, the following inequality relates
the
maximum number of positive steady states  
and the mixed volume (with respect to any conservation-law matrix):
\begin{align} \label{eq:MV-bound}
\mathrm{maximum~number~of~positive~steady~states}
~\leq ~ 
{\rm mixed~volume}~.
\end{align}
\end{proposition}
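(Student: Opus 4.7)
The plan is to apply Bernstein's theorem to the augmented system $\F$ in~\eqref{consys} and then to pass, via monotonicity of mixed volume, from a specific witnessing parameter choice to the generic one used in Definition~\ref{def:mv-crn}. Throughout I use the fact that a positive steady state of~\eqref{eq:ODE-mass-action} in the class $\scc_c$ is precisely a positive real solution of $\F(x)=0$, because replacing the linearly dependent steady-state equations by the conservation laws $(Wx-c)_k=0$ does not change the positive solution set inside $\scc_c$.

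Let $N$ denote the maximum number of positive steady states, witnessed by some $\kappa^\circ \in \mathbb{R}^r_{>0}$ and $c^\circ = W x^0$ with $x^0 \in \mathbb{R}^s_{>0}$. The $N$ positive real roots of $f_{c^\circ,\kappa^\circ}$ are isolated in $\mathbb{R}^s_{>0} \subseteq (\mathbb{C}^*)^s$, and Bernstein's theorem~\cite{bernstein} gives
\[
N ~\leq~ \vol(Q_1, Q_2, \ldots, Q_s),
\]
where $Q_i$ is the Newton polytope of $f_{c^\circ,\kappa^\circ, i}$. Next I would compare each $Q_i$ to the Newton polytope $P_i$ from Definition~\ref{def:mv-crn}, computed at generic $c^* \in \mathbb{R}^d_{\neq 0}$ and generic $\kappa^* \in \mathbb{R}^r_{>0}$. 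For $i \notin I$, the polynomial $f_{c,\kappa,i}$ does not depend on $c$, and the coefficient of each monomial is a signed sum of products of rate constants; such a sum is nonzero for generic $\kappa^*$ but can vanish at the specific $\kappa^\circ$, so $Q_i \subseteq P_i$. For $i = i_k \in I$, the Newton polytope of $(Wx - c^\circ)_k$ is contained in $\mathrm{conv}(\{0\} \cup \{e_j : W_{kj} \neq 0\}) = P_{i_k}$. By monotonicity of mixed volume in each argument, $\vol(Q_1, \ldots, Q_s) \leq \vol(P_1, \ldots, P_s)$, which is the mixed volume of $G$ by well-definedness~\cite[Remark~8]{OSTT}, establishing~\eqref{eq:MV-bound}.

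The delicate step I anticipate is the Bernstein estimate for the possibly non-generic parameters $(\kappa^\circ, c^\circ)$: if the complex vanishing locus of $f_{c^\circ,\kappa^\circ}$ has a positive-dimensional component through a positive real point, the textbook statement of Bernstein's theorem does not apply directly. The standard remedy is a small generic perturbation of $\kappa^\circ$, under which every solution in $(\mathbb{C}^*)^s$ becomes isolated and the count of positive real solutions deforms continuously; combined with the fact that the Newton polytopes only shrink under specialization of parameters, the perturbed inequality passes to the limit. With this wrinkle handled, the remainder of the argument is a direct application of Bernstein/BKK together with mixed-volume monotonicity.
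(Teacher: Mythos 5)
Your argument is correct and is essentially the proof the paper intends: the paper's own ``proof'' is just the citation of Bernstein's theorem via \cite[Proposition~8]{OSTT}, and your write-up---the BKK bound applied to the augmented system at the witnessing parameters $(\kappa^\circ, c^\circ)$, followed by the Newton-polytope containments $Q_i \subseteq P_i$ and monotonicity of mixed volume to pass to the generic parameters of Definition~\ref{def:mv-crn}---is exactly the expansion of that citation. The non-isolated-solution caveat you flag is genuine but is the same one absorbed into the cited result, and your perturbation remedy is the standard fix.
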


The \emph{mixed-volume overcount} 
measures how tight the bound~\eqref{eq:MV-bound} is.
Of particular interest are networks with 0 mixed-volume overcount, 
because for these networks, the mixed volume precisely and efficiently calculates the maximum number of positive steady states.

\begin{definition}\label{def:overcount}
The \defword{mixed-volume overcount} of a reaction network $G$ is 
\[
~({\rm mixed~volume~of~}G)
~- ~ 
(\mathrm{maximum~number~of~positive~steady~states~of~}G)
~.
\]
\end{definition}

An example considered in earlier work is the extracellular signal-regulated kinase (ERK) network. 
This is an important biological signaling network known to be multistationary (and also bistable) \cite{OSTT, long-term}. 
For the ERK network and several simplified versions of the network, the mixed-volume overcount is 2 -- for the fully irreversible and reduced subnetworks -- or (conjectured to be) 4 - for the full network and the subnetwork obtained by removing one reaction (specifically, the reaction  $k_{\rm{on}}$)~\cite[Proposition 9 and Conjecture 1]{OSTT}.

\subsection{One-species networks} \label{sec:1-species}
Here we recall some definitions from~\cite{which-small}.

\begin{definition} \label{def:arrow-diagram}
Let $G$ be a reaction network containing only one species $A$. Each reaction of $G$ therefore has the form $aA \to bA$, where $a,b \ge 0$ and $a \ne b$. Let $m$ be the number of (distinct) reactant complexes, and let $a_1< a_2 < \ldots < a_m$ be the stoichiometric coefficients. The \defword{{arrow diagram}} of $G$, denoted by $\defword{\rho} = (\rho_1, \ldots , \rho_m)$, is the element of $\{\to , \la, \lradot \}^m$ with:
\begin{equation*}
 \defword{\rho_i}~:=~ 
 \left\lbrace\begin{array}{ll}
   \to & \text{if for all reactions $a_iA \to bA$ in $G$, we have $b > a_i$} \\
   \la & \text{if for all reactions $a_iA \to bA$ in $G$, we have $b < a_i$} \\
   \lradot & \text{otherwise.}
 \end{array}\right.
\end{equation*}
\end{definition}

\begin{definition} \label{def:2-sign-change}
For nonnegative integers $T \geq 0$, a \defword{{$T$-alternating network}} is a 1-species network with 
exactly $T+1$ reactions and with arrow diagram $\rho \in \{\to , \la\}^{T+1}$ such that, if $T \geq 1$, we have $\rho_i = \to$ if and only if $\rho_{i+1} = \la$ for all $i \in \{1,2, \ldots, T\}$. 
\end{definition}


\begin{example} \label{ex:alternating} 
Consider the following network:
 	\begin{align*}
	G~=~\{ 0 \leftarrow A \to 2A \rightleftharpoons 3A\}~.
	\end{align*}
 Two 1-alternating subnetworks of $G$ 
	have arrow diagram $(\to, \la)$:
	$\{ A \to 2A,~ 2A \leftarrow 3A\}$ and
	$\{  2A \to 3A,~ 2A \leftarrow 3A\}$.
On the other hand, $\{ 0 \leftarrow A,~ A \to 2A\}$ is {\em not} a 1-alternating subnetwork of $G$: 
	its arrow diagram is $(\lradot)$. 
Finally, $\{ 0 \leftarrow A,~ 2A \to 3A, ~ 2A \leftarrow 3A \}$ is a 2-alternating subnetwork of $G$ with arrow diagram $(\la, \to, \la)$.
\end{example}

The following result follows directly from~\cite[Theorem 3.6]{which-small} and its proof:
\begin{proposition}[Number of steady states for one-species networks] \label{prop:joshi-shiu}
Let $G$ be a reaction network with only one species (and at least one reaction).  Then, the maximum number of positive steady states of $G$
equals the maximum value of $T \in \mathbb{Z}_{\geq 0}$ for which $G$ has a $T$-alternating subnetwork.
\end{proposition}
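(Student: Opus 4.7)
The plan is to recast the steady-state equation for a one-species network as a single univariate polynomial, read its sign pattern off the arrow diagram, and then apply Descartes' rule of signs in both directions.

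First, I would write the mass-action ODE as $\dot x = f(x;\kappa) = \sum_{i=1}^{m} c_i(\kappa)\, x^{a_i}$, where $a_1 < a_2 < \cdots < a_m$ are the distinct reactant stoichiometric coefficients and $c_i(\kappa) = \sum_{a_i A \to b A \in \RR} \kappa_{a_i,b}(b - a_i)$. By Definition~\ref{def:arrow-diagram}, the sign of $c_i(\kappa)$ is positive for every positive $\kappa$ if $\rho_i = \to$, negative for every positive $\kappa$ if $\rho_i = \la$, and otherwise (when $\rho_i = \lradot$) can be made positive, negative, or zero by an appropriate choice of positive rate constants.

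For the upper bound, Descartes' rule of signs says that the number of positive roots of $f(x;\kappa)$ is bounded by the number of sign changes in the nonzero coefficients $c_i(\kappa)$. Suppose $\kappa^\ast$ realizes $k$ positive steady states; then there are indices $i_1 < i_2 < \cdots < i_{k+1}$ at which the signs of $c_{i_j}(\kappa^\ast)$ strictly alternate. For each $j$, I would select a single reaction $a_{i_j}A \to bA$ whose summand in $c_{i_j}(\kappa^\ast)$ has the same sign as $c_{i_j}(\kappa^\ast)$ itself; these $k+1$ reactions form a $k$-alternating subnetwork of $G$.

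For the lower bound, given a $T$-alternating subnetwork $G'$ with arrow diagram $\rho' \in \{\to,\la\}^{T+1}$, I would first invoke the sharpness of Descartes' rule of signs: any alternating sign pattern on a fixed set of exponents can be realized by a polynomial with exactly $T$ simple positive roots (constructed by prescribing the roots and then scaling the leading term to obtain the correct leading sign). Converting the resulting magnitudes into rate constants for $G'$, I would then extend to the full network $G$ by assigning sufficiently small positive rate constants to all remaining reactions; each such reaction perturbs one coefficient $c_i(\kappa)$ by an arbitrarily small amount (only the $\lradot$ positions are at risk of a sign flip, and smallness prevents this). By continuous dependence of simple roots on coefficients, the $T$ simple positive roots persist, so $G$ admits $T$ positive steady states. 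The main subtlety is precisely this perturbation step: one must ensure the $T$ simple positive roots constructed from $G'$ lie in a compact region away from $0$ and $\infty$ so that smallness of the added rate constants is enough to preserve the root count, which follows from applying the implicit function theorem (or Rouché's theorem on small disks around each root) to the parametrized family in $\kappa$.
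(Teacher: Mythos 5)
The paper does not actually prove this proposition; it imports it wholesale from \cite[Theorem 3.6]{which-small}, so any self-contained argument is ``different'' from the paper's one-line citation. Your Descartes-based argument is in fact close in spirit to the proof in that reference, and the overall architecture is sound: grouping monomials by reactant, reading the sign of each coefficient $c_i(\kappa)$ off the arrow diagram, extracting an alternating subnetwork from a realized sign sequence for the upper bound, and perturbing a polynomial supported on the subnetwork for the lower bound. The upper-bound direction is complete as written (a nonzero $c_{i_j}(\kappa^\ast)$ must contain a summand of the same sign, and choosing one reaction per index yields a subnetwork whose arrow diagram lies in $\{\to,\leftarrow\}^{k+1}$ and alternates). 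The final sandwich --- the constructed $\kappa$ gives at least $T$ roots while Descartes caps every $\kappa$ at $T$ --- correctly upgrades ``at least $T$'' to ``admits exactly $T$'' as required by Definition~\ref{def:mss}.

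The one step that does not work as literally described is the realization lemma in the lower bound: ``prescribing the roots and then scaling the leading term.'' If you prescribe roots $r_1<\dots<r_T$ and form $x^{a_{i_1}}\prod_{\ell=1}^{T}(x-r_\ell)$, the support is the \emph{consecutive} set $\{a_{i_1},a_{i_1}+1,\dots,a_{i_1}+T\}$, whereas the exponents $a_{i_1}<\dots<a_{i_{T+1}}$ of your $T$-alternating subnetwork may have gaps; the product generically has nonzero coefficients at exponents not available to $G'$ and need not have a term at each $a_{i_j}$. What you need is the sharpness of Descartes' rule for an \emph{arbitrary} prescribed support with alternating signs. This is true but requires a different construction, e.g.\ induction on the number of terms: given $g_{n-1}=\sum_{j<n} d_j x^{a_{i_j}}$ with alternating signs and $n-2$ simple positive roots, add $d_n x^{a_{i_n}}$ with $\operatorname{sign}(d_n)=-\operatorname{sign}(d_{n-1})$ and $|d_n|$ small; the old simple roots persist, and the sign of $g_n$ at $+\infty$ flips relative to its sign just beyond the largest old root, producing one new root, while Descartes forces all $n-1$ roots to be simple. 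With that lemma substituted, the rest of your perturbation argument (small rate constants on the remaining reactions, persistence of simple roots on a compact subset of $(0,\infty)$) goes through.
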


\section{Results} \label{sec:results}
In Section~\ref{sec:1-rxn-or-1-species}, we characterize the mixed volume and mixed-volume overcount of networks with only one reaction or one species.  As a consequence, we show that the mixed-volume overcount can be arbitrarily large (Corollary~\ref{cor:non-equatorial-networks}).
Subsequently, in Section~\ref{sec:2-species-2-rxns}, we show that nearly all (genuine) networks with two species and two reactions 
have mixed-volume overcount 0
(Theorem~\ref{thm:2-rxn-2-species}).

\subsection{Networks with only one reaction or one species} \label{sec:1-rxn-or-1-species}

\begin{proposition}[Mixed volume of one-reaction networks]\label{prop:mv-1reaction}
For a network with only a single reaction, the mixed volume is 0 and the mixed-volume overcount is 0.
\end{proposition}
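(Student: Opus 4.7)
The plan is to show that the augmented polynomial system for a one-reaction network has a Newton polytope equal to a single point, which forces the mixed volume to vanish, and then to conclude the overcount statement by observing that no positive steady states can exist.

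First I would analyze the structure of the augmented system. For a network with a single reaction $y_1 \to y_2$, the ODE system from~\eqref{eq:ODE-mass-action} is $f_\kappa(x) = \kappa_1 x^{y_1}(y_2 - y_1)$, so every coordinate function $f_i$ is a scalar multiple of the single monomial $x^{y_1}$. Since $y_1 \ne y_2$, the stoichiometric subspace $S = \mathrm{span}(y_2 - y_1)$ has dimension $1$, so $d = s - 1$ and the conservation-law matrix $W$ has $s-1$ rows. Consequently, in the augmented system $f_{c,\kappa}$ of~\eqref{consys}, exactly one index $i_0 \in \{1,\ldots,s\} \setminus I$ is not replaced by a conservation law.

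Next I would check that $f_{c,\kappa,i_0}$ is a nonzero monomial. Using the row-reduced form of $W$, the unique non-pivot column is $i_0$, and every element of $S = \ker(W)$ is determined by its $i_0$-coordinate; since $y_2 - y_1 \ne 0$, we must have $(y_2 - y_1)_{i_0} \ne 0$, so $f_{c,\kappa,i_0} = \kappa_1 (y_2 - y_1)_{i_0} \, x^{y_1}$ is a nonzero monomial. Its Newton polytope is the single point $P_{i_0} = \{y_1\}$.

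Then I would invoke the standard fact that the mixed volume vanishes when one of the polytopes is a point. This follows from translation invariance: $\lambda_{i_0} P_{i_0} = \{\lambda_{i_0} y_1\}$ is a point, so the Minkowski sum $\sum_{i=1}^s \lambda_i P_i$ is a translate of $\sum_{i \ne i_0} \lambda_i P_i$ whose $s$-dimensional volume is independent of $\lambda_{i_0}$. Hence the coefficient of $\lambda_1 \lambda_2 \cdots \lambda_s$ in $\vol(\sum_i \lambda_i P_i)$ is $0$, i.e., the mixed volume of $G$ is $0$.

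Finally, for the overcount, I would note that Proposition~\ref{prop:bounds} forces the maximum number of positive steady states to be bounded above by $0$, so it is exactly $0$ and the overcount is $0$. As a sanity check, this is also clear directly: at any $x^* \in \mathbb{R}^s_{>0}$ one has $\kappa_1 x^{*y_1} > 0$, so $f_\kappa(x^*) = \kappa_1 x^{*y_1}(y_2 - y_1) \ne 0$, confirming that positive steady states do not exist. The only delicate step is the verification that $f_{c,\kappa,i_0}$ is genuinely nonzero, which is the reason for tracking the row-reduced structure of $W$; everything else is a straightforward application of the definitions.
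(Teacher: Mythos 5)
Your proof is correct and follows essentially the same route as the paper's: the lone non-conservation-law equation is a single nonzero monomial, its Newton polytope is a point, so the mixed volume vanishes, and Proposition~\ref{prop:bounds} then forces the overcount to be $0$. You are somewhat more careful than the paper in verifying that $(y_2-y_1)_{i_0}\neq 0$ and in explaining why one point polytope suffices to kill the mixed volume even though the conservation-law polytopes are not points, but the underlying argument is the same.
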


\begin{proof}
Let $G$ be a network with only one reaction.
The right-hand side of the ODE consists of a single monomial, 
so the Newton polytope is just a point (the exponent vector of the monomial).
Hence, the mixed volume of $G$ is 0, 
and so the mixed-volume overcount is 0, by Proposition~\ref{prop:bounds}.
\end{proof}

\begin{theorem}[Mixed volume of one-species networks] \label{thm:mv-1species}
Let $G$ be a reaction network that contains only one species $A$. 
Let $m$ be the number of (distinct) reactant complexes, and let $a_1< a_2 < \ldots < a_m$ be their stoichiometric coefficients. Then 
\begin{align*}
\mathrm{mixed~volume~of~} G
    ~=~
    a_m-a_1~.
\end{align*}
\end{theorem}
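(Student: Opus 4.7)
The plan is to reduce the mixed volume computation to measuring the length of the Newton polytope of a single univariate polynomial. Since $G$ has only one species, the mixed volume involves a single polytope $P_1 \subset \mathbb{R}^1$, and by Definition~\ref{def:mv-crn} this mixed volume equals the $1$-dimensional volume of $P_1$, i.e., its length. Moreover, every reaction in a one-species network has nonzero reaction vector (since $a \neq b$ in any $aA \to bA$), so the stoichiometric subspace is all of $\mathbb{R}^1$, giving $d = 0$; by the remark following~\eqref{consys}, the augmented system is then just $f_\kappa$ itself, a single univariate polynomial.

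Next, I would group the terms of $f_\kappa$ by reactant complex to write $f_\kappa(x) = \sum_{i=1}^m c_i(\kappa)\, x^{a_i}$, where each coefficient $c_i(\kappa)$ is a linear form in the rate constants whose scalar coefficients are the net changes $b - a_i$ over reactions $a_iA \to bA$ in $\RR$. Since Definition~\ref{def:crn} excludes diagonal pairs $(a_iA, a_iA)$, each such linear form is nontrivial, and therefore each $c_i(\kappa)$ is nonzero for generic $\kappa \in \mathbb{R}^r_{>0}$. In particular, the extreme exponents $a_1$ and $a_m$ appear with nonzero coefficient, so ${\rm Newt}(f_\kappa) = {\rm conv}\{a_1,\ldots,a_m\} = [a_1, a_m]$, and the mixed volume equals $a_m - a_1$.

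I do not anticipate a serious obstacle. The only subtlety is the genericity argument ensuring that the coefficients of $x^{a_1}$ and $x^{a_m}$ do not vanish; this follows immediately from the prohibition on diagonal pairs in Definition~\ref{def:crn}, since it guarantees that at least one summand $b - a_i$ in each $c_i(\kappa)$ is nonzero.
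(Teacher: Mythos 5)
Your proof is correct and follows essentially the same route as the paper's: observe that there are no conservation laws, so the mixed volume is just the length of the Newton polytope of the single ODE, which is the segment $[a_1,a_m]$. Your explicit genericity argument (that each coefficient $c_i(\kappa)$ is a nontrivial linear form because $b\neq a_i$, hence nonzero for generic $\kappa$) is a detail the paper leaves implicit when it asserts the leading and lowest-degree monomials are $x_1^{a_m}$ and $x_1^{a_1}$, but it does not change the argument.
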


\begin{proof}
As $G$ has only one species, 
there are no conservation laws
and only one differential equation. 
In this equation, the leading monomial is $x_1^{a_m}$, and the lowest-degree monomial is $x_1^{a_1}$.  
The Newton polytope of this single polynomial is therefore the line segment between $a_1$ and $a_m$. Thus, by definition, the mixed volume of $G$ is $a_m-a_1$.
\end{proof}

\begin{corollary}\label{cor:non-equatorial-networks}
The mixed-volume overcount can be arbitrarily large.
\end{corollary}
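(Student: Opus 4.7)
The plan is to exhibit an explicit one-parameter family of one-species networks whose mixed volume grows without bound while the maximum number of positive steady states stays constant, and then read off the conclusion from Theorem~\ref{thm:mv-1species} and Proposition~\ref{prop:joshi-shiu}.

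Concretely, for each integer $N \geq 2$ I would consider the one-species network
\[
  G_N ~=~ \{\, 0 \to A, \quad NA \to (N-1)A \,\}.
\]
The two distinct reactant complexes have stoichiometric coefficients $a_1 = 0$ and $a_2 = N$, so Theorem~\ref{thm:mv-1species} immediately gives that the mixed volume of $G_N$ equals $N$. This is the easy half.

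For the other half, I need to bound the maximum number of positive steady states of $G_N$. By Proposition~\ref{prop:joshi-shiu}, this equals the largest $T$ for which $G_N$ has a $T$-alternating subnetwork. Since $G_N$ has only two reactions, clearly $T \le 1$. Moreover, the reactant $0$ supports only the reaction $0 \to A$ (producing), and the reactant $NA$ supports only $NA \to (N-1)A$ (degrading), so the arrow diagram of $G_N$ is $(\to, \la)$, which is 1-alternating. Thus the maximum number of positive steady states of $G_N$ is exactly $1$. (As a sanity check, the ODE is $\dot x = \kappa_1 - \kappa_2 x^N$, with a unique positive root.)

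Combining these two facts, the mixed-volume overcount of $G_N$ equals $N-1$, which tends to infinity with $N$, proving the corollary. No step looks like a real obstacle; the only thing to be careful about is choosing a family whose arrow diagram stays 1-alternating (so that the steady-state count does not grow with $N$) while the spread $a_m - a_1$ does grow with $N$, and the family $G_N$ above is arguably the simplest such choice.
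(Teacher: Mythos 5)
Your proof is correct and follows essentially the same route as the paper: the paper uses the family $0 \rightleftharpoons nA$, whose ODE $k_1 - k_2 a^n$ (up to scaling) coincides with that of your $G_N$, computes the mixed volume as $n$ via Theorem~\ref{thm:mv-1species}, and observes there is exactly one positive steady state, giving overcount $n-1$. Your only cosmetic difference is certifying the single steady state via the arrow diagram and Proposition~\ref{prop:joshi-shiu} rather than by directly noting the polynomial has a unique positive root.
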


\begin{proof}
Consider the network $\ce{0 <=>[$k_1$][$k_2$] n A}$, 
where $n\in \N$. 
The right-hand side of the mass-action ODEs~\eqref{eq:ODE-mass-action} 
is the polynomial~$-k_2 a^n + k_1$, which 
has
precisely one positive real root (namely, $a=\sqrt[n]{k_1/k_2}$).  However, by Theorem~\ref{thm:mv-1species}, the mixed volume is $n$. So, the mixed-volume overcount is $(n-1)$.  
\end{proof}

\begin{theorem}[One-species networks with mixed-volume overcount 0]\label{thm:equatorial-networks}
Let $G$ be a reaction network that contains only one species $A$.  
Let $m$ be the number of (distinct) reactant complexes, and let $a_1< a_2 < \ldots < a_m$ be their stoichiometric coefficients. Then $G$
has mixed-volume overcount 0
if and only if 
$G$ has an $(m-1)$-alternating subnetwork
and 
$a_i=a_1+i-1$ for all $i \in \{2, 3, \dots,m\}$.
\end{theorem}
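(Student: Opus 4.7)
The plan is to combine the two previously established results. By Theorem~\ref{thm:mv-1species}, the mixed volume of $G$ equals $a_m - a_1$. By Proposition~\ref{prop:joshi-shiu}, the maximum number of positive steady states of $G$ equals
\[
T^* ~:=~ \max\{T \in \mathbb{Z}_{\geq 0} \mid G \text{ has a } T\text{-alternating subnetwork}\}.
\]
The mixed-volume overcount is the difference, so the content of the theorem is that $a_m - a_1 = T^*$ if and only if the two stated conditions hold.

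The first step is to sandwich these quantities with the common value $m-1$. Since $a_1 < a_2 < \cdots < a_m$ are distinct nonnegative integers, we have $a_m - a_1 \geq m - 1$, with equality if and only if $a_i = a_1 + i - 1$ for all $i$. In the other direction, unpacking Definition~\ref{def:2-sign-change}, a $T$-alternating subnetwork has arrow diagram of length $T+1$, hence exactly $T+1$ distinct reactant complexes; since these come from the $m$ reactants of $G$, we obtain $T^* \leq m - 1$. Combining these inequalities with the general bound of Proposition~\ref{prop:bounds} yields the chain
\[
T^* ~\leq~ m - 1 ~\leq~ a_m - a_1.
\]

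Now the theorem follows immediately: the overcount $(a_m - a_1) - T^*$ equals zero if and only if both inequalities in the chain are equalities. The right-hand equality $a_m - a_1 = m - 1$ is precisely the consecutive-integer condition $a_i = a_1 + i - 1$, and the left-hand equality $T^* = m - 1$ is precisely the existence of an $(m-1)$-alternating subnetwork. Conversely, if both conditions hold, then $T^* = m-1 = a_m - a_1$, so the overcount is $0$.

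There is no serious obstacle here; the argument is essentially a sandwich/pigeonhole observation once the two prior results are in hand. The only point requiring care is the bookkeeping for the inequality $T^* \leq m-1$, which I would justify by explicitly noting that the $T+1$ entries of the arrow diagram of a $T$-alternating subnetwork correspond to $T+1$ distinct reactant stoichiometric coefficients, each of which must appear among $a_1, \dots, a_m$.
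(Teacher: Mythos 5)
Your proof is correct and follows the same route as the paper, which simply states that the result ``follows directly from Proposition~\ref{prop:joshi-shiu} and Theorem~\ref{thm:mv-1species}''; your sandwich $T^* \leq m-1 \leq a_m - a_1$ is exactly the right way to fill in the details the paper leaves implicit.
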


\begin{proof}
This result follows directly from Proposition~\ref{prop:joshi-shiu} and Theorem~\ref{thm:mv-1species}.
\end{proof}

\begin{example}[Example~\ref{ex:alternating} continued] 
By Theorem~\ref{thm:equatorial-networks}, the network from Example~\ref{ex:alternating}
has mixed-volume overcount 0.
Indeed, it is a one-species network with 3 distinct reactant complexes (note that $\ce{0}$ is not a reactant complex in this network) satisfying $a_i=a_1+i-1$ for $i\in \{2,3\}$ (here the notation is as in Theorem~\ref{thm:equatorial-networks} with $a_1=1$), and it has a $2$-alternating subnetwork.
\end{example}

\subsection{Networks with two species and two reactions} \label{sec:2-species-2-rxns}
Up to relabeling species, there are 210 genuine, at-most-bimolecular networks with two species and two reactions~\cite{banaji-count}. 
These networks, which were enumerated by Banaji, are listed at \url{https://reaction-networks.net/networks/}. 
Here we determine that 
92\% of
these networks
have mixed-volume overcount 0 (Theorem~\ref{thm:2-rxn-2-species}); the 16 exceptional networks are listed in Table~\ref{tab:16}.

The following result, which follows directly from~\cite[Lemma 2.7, Lemma 4.1, and Theorem~4.8]{which-small} 
(also cf.~\cite[Corollary~4.12 and the preceding paragraph]{which-small}), 
implies that the 210 networks we consider in this subsection
 are {\em not} multistationary.

\begin{proposition}  \label{prop:at-most-bimol}
If $G$ is an at-most-bimolecular reaction network with exactly two species and two reactions, then the maximum number of positive steady states of $G$ is at most 1. 
Moreover, this maximum number is 1 if the two reaction vectors of $G$ are negative scalar multiples of each other, and 0 otherwise.
\end{proposition}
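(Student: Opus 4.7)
The plan is to argue directly from the two-reaction structure of the ODE and then to finish by invoking the cited results from \cite{which-small}.

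First I would observe that at any positive steady state $\mathbf{x}^* \in \mathbb{R}_{>0}^2$, the mass-action ODE~\eqref{eq:ODE-mass-action} reduces to
\[
\kappa_1 (\mathbf{x}^*)^{y_1}\, v_1 \,+\, \kappa_2 (\mathbf{x}^*)^{y_2}\, v_2 ~=~ 0,
\]
where $v_1, v_2$ are the two reaction vectors and the scalar coefficients $\kappa_i (\mathbf{x}^*)^{y_i}$ are strictly positive. Such a relation can hold only if $v_1$ and $v_2$ are linearly dependent with opposite signs, i.e., $v_1 = -\lambda v_2$ for some $\lambda > 0$. Consequently, if $v_1$ and $v_2$ are not negative scalar multiples of each other, then no positive steady state exists for any positive $\kappa$, so the maximum number of positive steady states is $0$. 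This handles the ``otherwise'' half of the proposition.

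Second, suppose $v_1 = -\lambda v_2$ for some $\lambda>0$. Then the two scalar steady-state equations are proportional, so over the positive orthant the steady-state locus is cut out by the single binomial
\[
\kappa_1 \mathbf{x}^{y_1} \,-\, \lambda \kappa_2 \mathbf{x}^{y_2} ~=~ 0,
\]
together with the single conservation law $w_1 x_1 + w_2 x_2 = c$ arising from $d = s - \mathrm{rank}(\Gamma) = 2 - 1 = 1$. It remains to show that this augmented binomial-plus-linear system has at most one positive solution in each stoichiometric compatibility class, and exactly one for some choice of $\kappa$ and $c$. Because the network is at-most-bimolecular, each entry of $y_1 - y_2$ has absolute value at most $2$, so the classification of positive steady states for two-species, two-reaction networks given in \cite[Lemma~2.7, Lemma~4.1, and Theorem~4.8]{which-small} (see also \cite[Corollary~4.12]{which-small}) applies directly and yields the desired count.

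The main obstacle I anticipate is the bookkeeping needed to match our setting to the hypotheses of \cite{which-small}. Their statements are phrased in terms of sign patterns and monotonicity of the scalar equation obtained after eliminating one variable via the conservation law, so one must verify that the combination ``at-most-bimolecular together with $v_1 = -\lambda v_2$'' falls squarely inside that framework across all of the $210$ enumerated networks. Once that matching is done, the conclusion that the maximum is at most $1$, and is exactly $1$ precisely when the reaction vectors are antiparallel, is immediate.
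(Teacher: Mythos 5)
Your proposal is correct in outline and is in fact more explicit than the paper's own treatment, which consists entirely of the citation to Lemma~2.7, Lemma~4.1, and Theorem~4.8 of \cite{which-small}. Your first step --- writing the steady-state condition at a positive point as $\kappa_1(\mathbf{x}^*)^{y_1}v_1+\kappa_2(\mathbf{x}^*)^{y_2}v_2=0$ with strictly positive scalar coefficients and nonzero $v_i$, and concluding that positive steady states can exist only when $v_1=-\lambda v_2$ for some $\lambda>0$ --- is a complete, elementary proof of the ``0 otherwise'' half, and it does not even need the at-most-bimolecular hypothesis. For the remaining half you reduce to a binomial together with one conservation law and then defer to the same results of \cite{which-small} that the paper cites, so on the hard part the two arguments coincide; your honesty about the ``bookkeeping'' is appropriate, because the at-most-bimolecular hypothesis is doing real work there. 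A binomial plus a linear equation can have two positive solutions in general (e.g.\ $x_1x_2=1$ together with $x_1+x_2=3$), and what excludes such configurations here is the interaction of bimolecularity with the requirement that product complexes be nonnegative; that is precisely the content of the cited classification, so the conclusion is not ``immediate'' from your reduction alone.

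One concrete soft spot: your claim that the positive steady-state locus is cut out by the binomial $\kappa_1\mathbf{x}^{y_1}-\lambda\kappa_2\mathbf{x}^{y_2}=0$ tacitly assumes the two reactant complexes are distinct. If the two reactions share a reactant --- for instance $A+B\to 2A$ and $A+B\to 2B$, whose reaction vectors $(1,-1)$ and $(-1,1)$ are negative scalar multiples of each other --- the ``binomial'' collapses to the single monomial $(\kappa_1-\lambda\kappa_2)\mathbf{x}^{y_1}$, which has no positive roots for generic $\kappa$ and a continuum of roots for the special choice $\kappa_2=\lambda\kappa_1$. In that situation your assertion that the system has ``exactly one [positive solution] for some choice of $\kappa$ and $c$'' does not follow from the argument you sketch, so the coincident-reactant case must be handled separately (or explicitly absorbed into the appeal to \cite{which-small}). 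The ``at most 1'' direction survives this degeneration only if one is careful about what counts as admitting $k$ steady states; the existence direction does not.
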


Proposition~\ref{prop:at-most-bimol} and the definition of mixed-volume overcount directly yield the following:
\begin{corollary} \label{cor:at-most-bimol}
Let $G$ be an at-most-bimolecular reaction network with exactly two species and two reactions.  If the mixed volume of $G$ is at least 2, then the mixed-volume overcount is at least 1.
\end{corollary}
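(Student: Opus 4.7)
The plan is to derive the corollary as a direct consequence of Proposition~\ref{prop:at-most-bimol} together with Definition~\ref{def:overcount}. Since the statement is a numerical comparison between two quantities, and one of them has just been pinned down sharply by the preceding proposition, the argument should consist of little more than chaining the two facts together.

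First I would invoke Proposition~\ref{prop:at-most-bimol}: for any at-most-bimolecular network $G$ with exactly two species and two reactions, the maximum number of positive steady states is at most $1$. I would note that this bound is blind to the particular reaction vectors in question, so it applies regardless of whether the hypothesis ``the two reaction vectors are negative scalar multiples of each other'' is satisfied; we only need the upper bound, not the more refined dichotomy in that proposition. Then I would unpack Definition~\ref{def:overcount}, which gives
\[
\text{mixed-volume overcount of } G ~=~ (\text{mixed volume of } G) ~-~ (\text{max.\ number of positive steady states of } G).
\]

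Finally, combining the hypothesis $\text{(mixed volume of } G) \geq 2$ with the bound on the maximum number of positive steady states yields
\[
\text{mixed-volume overcount of } G ~\geq~ 2 - 1 ~=~ 1,
\]
which is the claim. I do not anticipate any obstacle here: the corollary is essentially a bookkeeping consequence of the two cited results, and the only subtlety worth flagging is that the non-negativity of the overcount (guaranteed by Proposition~\ref{prop:bounds}) is not used in the argument -- all that is needed is the sharp upper bound of $1$ on the number of positive steady states provided by Proposition~\ref{prop:at-most-bimol}.
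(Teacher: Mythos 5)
Your argument is correct and is exactly the paper's: the corollary is stated as a direct consequence of Proposition~\ref{prop:at-most-bimol} (maximum number of positive steady states at most $1$) combined with Definition~\ref{def:overcount}. Nothing further is needed.
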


We use the following procedure to compute (by using {\tt PHCpack}~\cite{phcpack}, as in~\cite{OSTT}) 
the mixed-volume overcount of a 2-species, 2-reaction network:  

\begin{proc}\label{proc:mv-overcount} 
~
\emph{Input}: A 2-species, 2-reaction network  $G$.

\emph{Output}: the mixed-volume overcount of $G$.
\begin{enumerate}\setcounter{enumi}{-1} 
    \item Compute the system augmented by conservation laws~(\ref{consys}), denoted by $\F$, for some choice of conservation-law matrix $W$. 
    \item Compute the mixed volume of $G$, as follows.
    Viewing the two polynomials in $\F$ as polynomials in $x_1$ and $x_2$, 
    substitute 1 for all coefficients; let {\tt poly1} and {\tt poly2} be the 
    resulting polynomials.  
    Next, run the following {\tt Macaulay2} code: 
\begin{verbatim}
    loadPackage "PHCpack"
    S = CC[x1,x2];
    F = {poly1 , poly2};
    mixedVolume(F)
\end{verbatim}
    \item Compute the maximum number of positive steady states:
    \begin{enumerate}
        \item 
        If $G$ has no linear conservation laws, the maximum number of positive steady states is 0.
    
        \item 
        If $G$ has a linear conservation law, determine the maximum number of positive steady states of $G$ by analyzing the possible numbers of positive roots of $f_{c,\kappa}=0$ 
        (or by other means, e.g., if applicable, Proposition~\ref{prop:at-most-bimol}). 
        


        
    \end{enumerate}
    \item 
    Output 
    the difference between the mixed volume (from Step 1) and the maximum number of positive steady states (from Step 2).
\end{enumerate}
\end{proc}


\begin{proof}[Proof of correctness of Procedure~\ref{proc:mv-overcount}]

    The correctness of Step 1 is due to the fact that mixed volume considers only the supports of polynomials.  
    The correctness of Step 2(a) follows from \cite[Lemma 4.1]{which-small}.  Step 2(b) is correct by construction of $\F$. 
    Finally, the correctness of Step~3 follows directly from the definition of mixed-volume overcount~(Definition~\ref{def:overcount}).  
\end{proof}

\begin{example}\label{ex:max-0pos-ss}
Consider $G=$ \{\ce{A + B ->[$k_1$] 2B <-[$k_2$] 2A}\}. 

\begin{enumerate}\setcounter{enumi}{-1}
    \item The system augmented by conservation laws is 
 \begin{equation}\label{eq:max-0pos-ss-aug-sys}
         \begin{cases}
         f_1(x_1,x_2) = x_1+x_2 - c_1 \\
         f_2(x_1,x_2) = 2k_2x_1^2+k_1x_1x_2~~~.
     \end{cases}
 \end{equation}
 
    \item Take 
    $k_1=2k_2=-c_1=1$
    in (\ref{eq:max-0pos-ss-aug-sys}), and compute the mixed volume of the resulting polynomial system. The mixed volume of the network is 1.
    \item We compute the maximum number of steady states:
    \begin{enumerate}
        \item There is a linear conservation law (namely, $f_1$), so continue to Step 2(b).

        \item The reaction vectors, $(-1,1)$ and $(-2, 2)$, are {\em not} negative scalar multiples of each other.  So, by Proposition~\ref{prop:at-most-bimol}, the maximum number of positive steady states is~0.  Alternatively, notice that $f_2(x_1^*,x_2^*)>0 $ when $x_1^*,x_2^*>0$, and so $\F=0$ never has positive roots. 
    \end{enumerate}
    \item The mixed-volume overcount is $1-0=1$.
\end{enumerate}

\end{example}


Next we provide two more examples of genuine 2-species, 2-reaction networks.
These examples show that determining the maximum number of positive steady states by analyzing the roots of $\F=0$ (Step 2(b) of Procedure~\ref{proc:mv-overcount}) is not straightforward in general.

\begin{example}[Example~\ref{ex:2s2r-network} continued]\label{ex:2s2r-network-continued}
Recall the genuine 2-species, 2-reaction network 
$ \{ \ce{2A ->[$k_1$] 2B},~\ce{B ->[$k_2$] A} \}$. 
Using Procedure~\ref{proc:mv-overcount}, we show below that the mixed-volume overcount of the network is 1.

\begin{enumerate}\setcounter{enumi}{-1}
    \item The system augmented by conservation laws is 
 \begin{equation}\label{eq:2s2r-aug-sys}
         \begin{cases}
         f_1(x_1,x_2) = x_1+x_2 - c_1 \\
         f_2(x_1,x_2) = 2k_1x_1^2-k_2x_2~~~.
     \end{cases}
 \end{equation}
 
    \item Take
    $2k_1=-k_2=-c_1=1$
    in (\ref{eq:2s2r-aug-sys}), 
    and compute the mixed volume of the resulting polynomial system. The mixed volume of the network is 2.
    \item We compute the maximum number of steady states:
    \begin{enumerate}
        \item There is a linear conservation law (namely, $f_1$), so continue to Step 2(b).

        \item 
        The reaction vectors are $(-2,2)$ and $(1,-1)$, which are negative scalar multiples of each other.  So, by Proposition~\ref{prop:at-most-bimol}, the maximum number of positive steady states is 1.  Alternatively, we  analyze the roots of $\F=0$, as follows.
        First, 
        $f_1=0$ yields $x_2=c_1-x_1$, which we substitute into $f_2=0$
        to get
    \[g(x_1)~=~2k_1x_1^2-k_2(c_1-x_1)~=~ 2k_1x_1^2+k_2x_1-k_2c_1~.\]
    This is a quadratic in $x_1$ with positive leading coefficient and negative vertical intercept (since $k_1,k_2,c_1>0$).
    Thus, for every choice of $k_1,k_2,c_1>0$, the quadratic has a unique positive real root in $x_1$, namely, $
    x_1^* = \left(-k_2+\sqrt{k_2^2+8c_1k_1k_2}\right)/(4k_1)$. 
    Therefore, the maximum number of steady states is at most 1.  In fact, this number is 1: when $k_1=1/2$, $k_2=1$ and $c_1=2$, there is a unique positive steady state, namely, $(x_1^*,x_2^*) = (1, 1)$.
    \end{enumerate}
    \item The mixed-volume overcount is $2-1=1$.
\end{enumerate}
\end{example}

\begin{example}\label{ex:network22}
Let $G =$  \{\ce{2A ->[$k_1$] 2B ->[$k_2$] A + B}\}. 

\begin{enumerate}\setcounter{enumi}{-1}
    \item The system augmented by conservation laws is 
 \begin{equation}\label{eq:network22-aug-sys}
         \begin{cases}
         f_1(x_1,x_2) = x_1+x_2 - c_1 \\
         f_2(x_1,x_2) = 2 k_1x_1^2-k_2x_2^2~~~.
     \end{cases}
 \end{equation}
 
    \item Take $2k_1=-k_2=-c_1=1$ in (\ref{eq:2s2r-aug-sys}), and compute the mixed volume of the resulting polynomial system. The mixed volume of the network is 2.
    \item We compute the maximum number of steady states:
    \begin{enumerate}
        \item There is a linear conservation law (namely, $f_1$), so continue to Step 2(b).

        \item 
        The reaction vectors, $(-2,2)$ and $(1,-1)$, are negative scalar multiples of each other.  So, Proposition~\ref{prop:at-most-bimol} implies that the maximum number of positive steady states is 1.  
        An alternate approach is as follows.
        We solve $f_2=0$ for $x_2$ (and use the fact that we are interested in only positive $x_1,x_2$), which yields $x^*_2=(\sqrt{2k_1/k_2}) x^*_1$. Next, we substitute this expression into $f_1=0$ and then solve to obtain $x^*_1=c_1/(1+\sqrt{2k_1/k_2})$.  Thus, 
        the network always admits 
        a unique positive steady state $(x_1^*,x_2^*)$.
    \end{enumerate}
    \item The mixed-volume overcount is $2-1=1$.
\end{enumerate}

\end{example}

\begin{remark}\label{rmk:counting-pos-roots-of-g}
The approaches that we present in this section 
for computing the maximum number of steady states of a network (Steps 2(a) and 2(b) of Procedure~\ref{proc:mv-overcount}) 
rely on the fact that the networks are 
at-most-bimolecular and have only two reactions and two species.
In general, however, completing Step 2 is not straightforward:
as mentioned in the Introduction, it requires  
counting the number of positive real roots of a parametrized polynomial system. 
This complication further motivates the need for graphical, algebraic, and geometric tools for counting positive steady states, in order to bypass a direct analysis of the polynomial system $f_{c,\kappa} = 0$.

\end{remark}

By applying Procedure~\ref{proc:mv-overcount}, we obtain a classification of genuine, at-most-bimolecular networks with two species and two reactions (Theorem~\ref{thm:2-rxn-2-species}).


\begin{table}[ht!]
    \centering
    \begin{tabular}{ccc}
    \hline
    & \textbf{Network} & \textbf{Mixed volume}  \\ \hline
    $(1)~$ & $\ce{2A -> 2B -> A + B}$ & 2\\ \hline
    $(2)~$ & 
    $\ce{2A -> 2B\ ,\ B -> A}$ & 2\\ \hline
    $(3)~$ & 
    $\ce{2A -> A}~,~\ce{ B -> A + B}$ & 2\\ \hline
    $(4)~$ & 
    $\ce{B -> A}~,~\ce{2A -> A + B}$ & 2\\ \hline
    $(5)~$ & 
    $\ce{B -> A}~,~\ce{2B -> A + B}$ & 1 \\ \hline
    $(6)~$ & 
    $\ce{2A <=> 2B}$ & 2\\ \hline
    $(7)~$ & 
    $\ce{2A -> A + B <- 2B}$ & 2\\ \hline
    $(8)~$ & 
    $\ce{B -> A}~,~\ce{2B -> 2A}$ & 1 \\ \hline
    $(9)~$ & 
    $\ce{B -> 2B}~,~\ce{A -> A + B}$ & 1 \\ \hline
    $(10)$ & 
    $\ce{2B -> 0}~,~\ce{A -> A + B}$ & 2\\ \hline
    $(11)$ & 
    $\ce{A <=> 2B}$ & 2 \\ \hline
    $(12)$ & 
    $\ce{A + B -> 2B <- 2A}$ & 1 \\ \hline
    $(13)$ & 
    $\ce{2A -> A + B -> 2B}$ & 1 \\ \hline
    $(14)$ & 
    $\ce{2A -> A}~,~\ce{A + B -> B}$ & 1 \\ \hline
    $(15)$ & 
    $\ce{A + B <=> 0}$ & 2 \\ \hline
    $(16)$ & 
    $\ce{B -> A}~,~\ce{A + B -> 2A}$ & 1 \\ \hline
    \end{tabular}
    \caption{Genuine, at-most-bimolecular networks with two species and two reactions for which the mixed-volume overcount is nonzero.  Each network has mixed-volume overcount 1. \label{tab:16}}
\end{table}

\begin{theorem}[Mixed volume of two-species, two-reaction networks]
\label{thm:2-rxn-2-species}
Let $G$ be a genuine, at-most-bimolecular network with 2 species and 2 reactions.  Then $G$ has mixed-volume overcount 0 if and only if $G$ is  (up to relabeling species) {\em not} one of the 16 networks listed in Table~\ref{tab:16}.  Moreover, each network in Table~\ref{tab:16} has mixed-volume overcount 1.
\end{theorem}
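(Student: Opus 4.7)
The plan is to prove the theorem by exhaustive enumeration: apply Procedure~\ref{proc:mv-overcount} to each of the 210 networks (up to relabeling species) from Banaji's list at the cited URL, and show that exactly the 16 networks in Table~\ref{tab:16} yield nonzero overcount. The key reduction that makes the enumeration manageable is that, for networks in this class, Proposition~\ref{prop:at-most-bimol} pins down the maximum number of positive steady states as either 0 or 1 --- specifically, 1 when the two reaction vectors are negative scalar multiples of each other and 0 otherwise --- so determining the overcount of each network reduces to computing its mixed volume and performing a trivial comparison of two vectors.

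For the forward direction, I would process the 16 networks of Table~\ref{tab:16} one at a time. For networks $(1)$--$(4)$, $(6)$, $(7)$, $(10)$, $(11)$, and $(15)$, whose listed mixed volume is 2, I would confirm the mixed-volume value via Step~1 of Procedure~\ref{proc:mv-overcount} (using \texttt{PHCpack}, as in the examples) and verify that the two reaction vectors are negative scalar multiples of each other; by Proposition~\ref{prop:at-most-bimol} the maximum number of positive steady states is then 1, giving overcount $2-1=1$. For networks $(5)$, $(8)$, $(9)$, $(12)$, $(13)$, $(14)$, and $(16)$, whose listed mixed volume is 1, I would similarly verify the mixed volume and check that the reaction vectors are not negative scalar multiples, so Proposition~\ref{prop:at-most-bimol} gives maximum number of positive steady states equal to 0 and overcount $1-0=1$. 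Examples~\ref{ex:max-0pos-ss}, \ref{ex:2s2r-network-continued}, and \ref{ex:network22} already execute this analysis for three of the 16 networks; the remaining entries follow the same template.

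For the converse direction, I would run Procedure~\ref{proc:mv-overcount} on each of the remaining 194 networks. For each such network, one checks that either (a) its mixed volume is 0, so the overcount is automatically 0, or (b) its mixed volume is 1 and the two reaction vectors are negative scalar multiples of each other, so the maximum number of positive steady states is 1 by Proposition~\ref{prop:at-most-bimol} and the overcount is $1-1=0$. Any network falling outside cases (a) and (b) would have overcount at least 1 by Corollary~\ref{cor:at-most-bimol}, and hence would appear in Table~\ref{tab:16}.

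The main obstacle is organizational rather than mathematical: checking 210 networks requires a careful, scripted pipeline through Banaji's list, \texttt{PHCpack}, and the reaction-vector comparison, plus a sanity check that the networks where the pipeline reports nonzero overcount coincide exactly with those tabulated. Because Proposition~\ref{prop:at-most-bimol} handles the steady-state count in closed form and the mixed-volume computation is fully automated, no individual network requires reasoning beyond what is already illustrated in the worked examples.
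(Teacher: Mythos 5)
Your proposal is correct and follows essentially the same route as the paper: an exhaustive, computer-assisted pass through Banaji's 210 networks using Procedure~\ref{proc:mv-overcount}, with the mixed volume computed via {\tt PHCpack} and the maximum number of positive steady states read off from Proposition~\ref{prop:at-most-bimol} (the paper notes that 185 of the networks already have mixed volume 0, and handles the remaining 25 exactly as you describe). The only cosmetic difference is that the paper sometimes substitutes a direct analysis of $f_{c,\kappa}=0$ for the reaction-vector criterion, but both reduce to the same case check.
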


\begin{proof}
Using Procedure~\ref{proc:mv-overcount},
we computed the mixed-volume overcount for all genuine 2-species,~2-reaction networks; see the supplementary file {\tt MV-overcount-2s-2r-networks.csv} in the repository \url{https://github.com/neeedz/mixedvolume}.
More details are as follows. 
Among the 210 networks, 
    185 of them have mixed volume 0
    and thus have mixed-volume overcount 0.
 For the remaining 25 networks (see Appendix~\ref{app:code}), 
 it is straightforward to compute the maximum number of positive steady states using 
 Proposition~\ref{prop:at-most-bimol} or by directly analyzing the system $f_{c,\kappa}=0$ as in Examples~\ref{ex:max-0pos-ss}--\ref{ex:network22}.  
 %
\end{proof}

We end this section by investigating why the networks in Table~\ref{tab:16} have nonzero mixed-volume overcount.  
These 16 networks fall into four classes:
\begin{enumerate}
    \item Networks (3), (9), (10), and (14) are essentially one-species networks (for each network, one of the two ODEs is 0), and so can be analyzed using the results in Section~\ref{sec:1-rxn-or-1-species}.
    \item Networks (6), (11), and (15) consist of a single pair of reversible reactions, so 
    (e.g., by Proposition~\ref{prop:at-most-bimol})
    the maximum number of positive steady states is 1.
    \item Networks (5), (8), (12), (13), and (16) have one species that is consumed in every reaction (while the other species is produced).  Thus, the maximum number of positive steady states is 0.
    \item Networks (1), (2), (4), and (7) (and also networks (3), (6), (10), (11), 
    and (15)) have mixed volume 2, so, by Corollary~\ref{cor:at-most-bimol}, the mixed-volume overcount is at least 1.
\end{enumerate}

\begin{remark}\label{rmk:generalizing-procedure-for-mvo}
In 
Examples~\ref{ex:2s2r-network-continued} and~\ref{ex:network22}, we computed the maximum number of positive steady states (Step 2 of Procedure~\ref{proc:mv-overcount}) by reducing the system $\F=0$ to a single univariate polynomial, and then checking that the positive roots (which can be viewed as ``partial solutions'') can be extended to positive roots of the original system.
Doing this for general networks, however, is difficult.  Indeed, for readers with knowledge of algebraic geometry, we note that the Extension Theorem~\cite[pp.~118-120]{CLO} requires an algebraically closed field and polynomials with a certain shape. 
%
\end{remark}

\begin{example}
Consider the following network with 3 species and 10 reactions:
\begin{align*}
    \ce{0 <=> A}~,~\ce{0 <=> B}~,~\ce{0 <=> C}\\
    \ce{2A <=> A + B <=> B + C}~~.
\end{align*}

\noindent This network has no conservation laws, and its augmented system is
\[
\begin{cases}
f_1 = k_1 - k_2 x_1 - k_7 x_1^2 + (k_8 - k_9) x_1 x_2 + k_{10} x_2 x_3\\
f_2 = k_3 - k_4 x_2 + k_7 x_1^2 - k_8  x_1 x_2\\
f_3 = k_5 - k_6 x_3 + k_9 x_1 x_2 - k_{10} x_2 x_3~~.
\end{cases}
\]
Analyzing the augmented system is challenging, and determining the maximum number of steady states of the network is not straightforward. 
This number is at least 2~\cite{mss-review},  
and we compute that its mixed volume is 6.
What is the mixed-volume overcount? 
Our wish is to answer this question in the future 
through a generalized version of Procedure~\ref{proc:mv-overcount}.
\end{example}

\section{Discussion} \label{sec:discussion}
Recall that our interest in the mixed volume of a reaction network comes from the fact that it bounds the maximum number of positive steady states.  We saw in previous work that this bound is surprisingly good for certain signaling networks, and here we again found that this bound performs well for small networks that are at-most-bimolecular.  As networks arising in biological applications are typically at-most-bimolecular, we might expect the mixed-volume overcount to be low for biological networks of small to medium size.

Another future research direction pertains to one aim of this work, which is to read off the mixed volume directly from a network.  We now can do this for networks with just one reaction or one species (Section~\ref{sec:1-rxn-or-1-species}). As for at-most-bimolecular networks with two reactions and two species, the mixed volume is (with the exception of the 16 networks in Table~\ref{tab:16}) exactly the maximum number of positive steady states, which can be ascertained using results in~\cite{which-small}.  We would like similar results for networks with more reactions or more species.

Continuing this line of investigation, we ask, {\em How do operations on networks affect the mixed volume (and thus the mixed-volume overcount)?} For instance, in Table~\ref{tab:16}, networks (1) and (7) can be obtained from each other by ``stretching'' one reaction (without changing the reactant or reaction vector); and similarly for networks (2) and (4).  Moreover, this operation does not affect the mixed volume or the overcount.  
(This line of investigation therefore would be 
somewhat similar in spirit to 
the work of Rojas~\cite{rojas1994convex} and Bihan and Soprunov~\cite{bihan-soprunov}.) 
Indeed, having a list of operations and their effect on the mixed volume would greatly aid our classification of networks. 

\subsection*{Acknowledgements}
This research was initiated by DS in the 2019 REU in
the Department of Mathematics at Texas A\&M University, supported by the NSF
(DMS-1757872).  NO and AS were partially supported by the NSF (DMS-1752672).
We thank Taylor Brysiewicz for helpful discussions.
\bibliographystyle{plain}
\bibliography{mixedvol-crn.bib}

\appendix

\newpage

\section{Networks with nonzero mixed volume}\label{app:code}

Below, we list the 25 genuine 2-species, 2-reaction networks with nonzero mixed volume, together with their maximum number of positive steady states and their augmented systems.  The first 16 networks here coincide with those listed in Table~\ref{tab:16}.

\begin{longtable}{ccccl}
    \hline
    & \textbf{Network} & \textbf{Mixed volume}  &
    {\bf Max} $\#$ & \textbf{System}
\\ \hline
    
    $(1)~$ 
    & $\ce{2A -> 2B -> A + B}$ 
    & 2 & 1
    & $\begin{cases}
    a + b - c_1\\
    2k_1a^2 - k_2b^2
    \end{cases}$
    \\ \hline
    
    $(2)~$ 
    & 
    $\ce{2A -> 2B\ ,\ B -> A}$ & 2 & 1
    & $\begin{cases}
    a+b-c_1\\
    2k_1a^2-k_2b
    \end{cases}$
    \\ \hline
    
    $(3)~$ 
    & 
    $\ce{2A -> A}~,~\ce{ B -> A + B}$ 
    & 2 & 1
    & $\begin{cases}
    -k_1a^2+k_2b\\
    0
    \end{cases}$
    \\ \hline
    
    $(4)~$ 
    & 
    $\ce{B -> A}~,~\ce{2A -> A + B}$ 
    & 2 & 1
    & $\begin{cases}
    a+b-c_1\\
    k_2a^2-k_1b
    \end{cases}$
    \\ \hline
    
    $(5)~$ 
    & 
    $\ce{B -> A}~,~\ce{2B -> A + B}$ 
    & 1 & 0 
    & $\begin{cases}
    a+b-c_1\\
    -k_2b^2-k_1b
    \end{cases}$
    \\ \hline
    
    $(6)~$ 
    & 
    $\ce{2A <=> 2B}$ 
    & 2 & 1
    & $\begin{cases}
    a+b-c_1\\
    2k_1a^2-2k_2b^2
    \end{cases}$
    \\ \hline
    
    $(7)~$ 
    & 
    $\ce{2A -> A + B <- 2B}$ 
    & 2 & 1
    & $\begin{cases}
    a+b-c_1\\
    k_1a^2-k_2b^2
    \end{cases}$
    \\ \hline

    $(8)~$ 
    & 
    $\ce{B -> A}~,~\ce{2B -> 2A}$ 
    & 1 & 0
    & $\begin{cases}
    a+b-c_1\\
    -k_1b-2k_2b^2
    \end{cases}$
    \\ \hline
    
    $(9)~$ 
    & 
    $\ce{B -> 2B}~,~\ce{A -> A + B}$ 
    & 1 & 0
    &
    $\begin{cases}
    0\\
    k_1b+k_2a
    \end{cases}$
    \\ \hline
    
    $(10)$ 
    & 
    $\ce{2B -> 0}~,~\ce{A -> A + B}$ 
    & 2 & 1
    &
    $\begin{cases}
    0\\
    -2k_1b^2+k_2a
    \end{cases}$
    \\ \hline
    
    $(11)$ 
    & 
    $\ce{A <=> 2B}$ 
    & 2 & 1
    &
    $\begin{cases}
    a+b-c_1\\
    -k_2b^2+k_1a
    \end{cases}$
    \\ \hline

    $(12)$ 
    & 
    $\ce{A + B -> 2B <- 2A}$ 
    & 1 & 0
    &
    $\begin{cases}
    a+b-c_1\\
    2k_2a^2+k_1ab
    \end{cases}$
    \\ \hline
    
    $(13)$ 
    & 
    $\ce{2A -> A + B -> 2B}$ 
    & 1 & 0
    &
    $\begin{cases}
    a+b-c_1\\
    k_1a^2+k_2ab
    \end{cases}$
    \\ \hline

    $(14)$ 
    & 
    $\ce{2A -> A}~,~\ce{A + B -> B}$ 
    & 1 & 0
    &
    $\begin{cases}
    -k_2a^2-k_1ab\\
    0
    \end{cases}$
    \\ \hline

    $(15)$ 
    & 
    $\ce{A + B <=> 0}$ 
    & 2 & 1
    &
    $\begin{cases}
    -k_1ab+k_2\\
    a-b
    \end{cases}$
    \\ \hline

    $(16)$ 
    & 
    $\ce{B -> A}~,~\ce{A + B -> 2A}$ 
    & 1 & 0 
    &
    $\begin{cases}
    a+b-c_1\\
    -k_1b-k_2ab
    \end{cases}$
    \\ \hline

    $(17)~$ & 
    $\ce{0 -> 2B}~,~\ce{A + B -> A}$ 
    & 1 & 1
    &      
    $\begin{cases}
    0\\
    -k_2ab+2k_1
    \end{cases}$
    \\ \hline

    $(18)~$ 
    & 
    $\ce{2B -> 0}~,~\ce{A + B -> A}$ 
    & 1 & 1
    & 
    $\begin{cases}
    0\\
    -2k_1b^2-k_2ab\\
    \end{cases}$
    \\ \hline
    
    $(19)~$ 
    & 
    $\ce{A + B -> 2A -> 2B}$ 
    & 1 & 1
    &
    $\begin{cases}
    a+b-c_1\\
    2k_2a^2-k_1ab
    \end{cases}$
    \\ \hline

    $(20)~$ 
    & 
    $\ce{A + B -> 2B -> A + B}$ 
    & 1 & 1
    &
    $\begin{cases}
    a+b-c_1\\
    k_1ab-k_2b^2
    \end{cases}$
    \\ \hline
    
    $(21)~$ 
    & 
    $\ce{A + B -> 2B}~,~\ce{B -> A}$ 
    & 1 & 1 
    & 
    $\begin{cases}
    a+b-c_1\\
    k_1ab-k_2b
    \end{cases}$
    \\ \hline
    
    $(22)~$ 
    & 
    $\ce{A -> 0}~,~\ce{B -> A + B}$ 
    & 1 & 1
    &
    $\begin{cases}
    -k_1a+k_2b\\
    0
    \end{cases}$
    \\ \hline
    
    $(23)~$ 
    & 
    $\ce{A <=> B}$ 
    & 1 & 1
    &
    $\begin{cases}
    a+b-c_1\\
    k_1a-k_2b
    \end{cases}$
    \\ \hline

    $(24)~$ 
    & 
    $\ce{A + B -> A}~,~\ce{0 -> B}$ 
    & 1 & 1 
    & 
    $\begin{cases}
    0\\
    -k_1ab+k_2
    \end{cases}$
    \\ \hline

    $(25)~$ 
    & 
    $\ce{A + B <=> A}$ 
    & 1 & 1
    &
    $\begin{cases}
    0\\
    -k_1ab+k_2a
    \end{cases}$
    \\ \hline
\end{longtable}

\end{document}